\newtheorem{theorem}{Theorem}
\newtheorem{lemma}{Lemma}
\begin{document}

\title{Age of Information-Oriented Probabilistic Link Scheduling for Device-to-Device Networks
}

\author{
	\IEEEauthorblockN{Lixin Wang$^{1}$, Qian Wang$^{2}$, He (Henry) Chen$^{3}$, Shidong Zhou$^{1}$}
	\IEEEauthorblockA{$^{1}$Department of Electronic Engineering, Tsinghua University, Beijing, China}
 \IEEEauthorblockA{$^{2}$School of Communication Engineering, Hangzhou Dianzi University, Zhejiang, China}

\IEEEauthorblockA{$^{3}$Department of Information Engineering, The Chinese University of Hong Kong, Hong Kong SAR, China} 
	\IEEEauthorblockA{Email: wanglx19@mails.tsinghua.edu.cn, qian.wang@hdu.edu.cn, 
he.chen@ie.cuhk.edu.hk, zhousd@tsinghua.edu.cn}

\thanks{The work of L. Wang and S. Zhou is supported in part by National Natural Science Foundation of China under Grants (62394294, 62394290) and the Fundamental Research Funds for the Central Universities under Grant 2242022k60006. The work of H. Chen is supported in part by RGC General Research Funds (GRF) under Project 14205020, the CUHK direct grant for research under Project 4055166.}
}

\maketitle

\begin{abstract}
This paper focuses on optimizing the long-term average age of information (AoI) in device-to-device (D2D) networks through age-aware link scheduling. The problem is naturally formulated as a Markov decision process (MDP). However, finding the optimal policy for the formulated MDP in its original form is challenging due to the intertwined AoI dynamics of all D2D links. To address this, we propose an age-aware stationary randomized policy that determines the probability of scheduling each link in each time slot based on the AoI of all links and the statistical channel state information among all transceivers. By employing the Lyapunov optimization framework, our policy aims to minimize the Lyapunov drift in every time slot. Nonetheless, this per-slot minimization problem is nonconvex due to cross-link interference in D2D networks, posing significant challenges for real-time decision-making. After analyzing the permutation equivariance property of the optimal solutions to the per-slot problem, we apply a message passing neural network (MPNN), a type of graph neural network that also exhibits permutation equivariance, to optimize the per-slot problem in an unsupervised learning manner. Simulation results demonstrate the superior performance of the proposed age-aware stationary randomized policy over baselines and validate the scalability of our method.
\end{abstract}

\begin{IEEEkeywords}
Device-to-device networks, age of information, Lyapunov optimization, messaging passing neural network.
\end{IEEEkeywords}

\section{Introduction}
\label{section1}





With the rapid advancement of the Internet of Things (IoT), a multitude of time-sensitive applications, including autonomous driving, monitoring systems, and industrial control, have emerged. Compared with conventional metrics like delay, these applications prioritize the timeliness of information from the perspective of the receiving end, which can be effectively characterized by the novel metric known as age of information (AoI)~\cite{kaul2012real}. On the other hand, device-to-device (D2D) communication, a pivotal technology for supporting these applications in IoT networks, can significantly increase spectral efficiency and network capacity by enabling devices to communicate directly with each other without routing through a base station~\cite{tehrani2014device}. 

However, the increasing number of mobile devices leads to a scarcity of frequency resources for D2D communication, making it essential to adopt frequency reuse mechanisms. This means that different D2D links may occupy the same frequency band to transmit simultaneously, leading to cross-link interference. Consequently, it is crucial to properly schedule D2D links and investigate the impact of potential interference between different D2D links on the AoI metric. Previous works have addressed the link scheduling problems in D2D networks using various metrics, such as sum-rate \cite{shen2017fplinq}, throughput \cite{baccelli2013adaptive}, and fairness \cite{baccelli2013adaptive,baccelli2014analysis}. However, these metrics differ from AoI, which is inherently Markovian and often necessitates consideration of long-term average performance. This calls for the design of new link scheduling strategies with a focus on optimizing AoI in D2D networks.

Some existing works have put some effort into investigating AoI-oriented link scheduling problems \cite{kadota2018scheduling, kadota2021minimizing, he2016optimizing, talak2019optimizing}. However, they often fail to consider a practical physical interference model. Specifically, a portion of these works focus on the collision interference model, where only one device is scheduled to access the shared channel at a time, and multiple simultaneous accesses result in failure for all scheduled devices \cite{kadota2018scheduling, kadota2021minimizing}. Given the spatial distribution of different devices in D2D networks, such an oversimplified setup can miss many transmission opportunities, leading to increased AoI. The authors in \cite{he2016optimizing, talak2019optimizing} considered abstracting the \textit{feasible activation set} from a general interference model and focused on scheduling devices within that pre-established set. However, this feasible activation set is assumed to be fixed and may not align with the {time-varying channel} common in real-world scenarios. 

This paper focuses on optimizing the long-term average AoI in D2D networks with practical {Rayleigh fading channels}. The problem presents two main challenges: 1) The optimization objective must be evaluated over the long term, meaning we cannot focus solely on current scheduling decisions but must also consider their future impact on AoI; 2) Scheduling one D2D link causes interference with other links. This cross-link interference is time-varying and must be carefully managed. For example, in a D2D network with two links, significant mutual interference usually prevents simultaneous scheduling. However, channel variations occasionally allow both links to be scheduled together. Effectively exploiting these changing transmission opportunities to minimize AoI has not been thoroughly investigated. A common approach to addressing this issue is to model it as a Markov Decision Process (MDP) in order to derive an optimal decision policy. However, the high-dimensional nature of the considered problem renders standard value iteration methods impractical, making deep reinforcement learning (DRL) a viable alternative \cite{sutton2018reinforcement}. For instance, the authors in \cite{leng2019age} formulated a Markov game and developed a multi-agent DRL algorithm to obtain a power control policy for minimizing the expected average AoI in a wireless ad-hoc network. Similarly, the authors in \cite{liu2022graph} proposed a graph neural network (GNN)-based DRL algorithm to minimize the average AoI by optimizing power control policy in D2D networks. Nonetheless, when a large number of devices are involved, DRL methods often suffer from sample inefficiency, requiring extensive interactions and imposing a heavy burden on real-world applications, not to mention the potential for slow convergence and poor performance.

Given the challenges associated with MDP-based approaches for the considered link scheduling problem, an alternative strategy to reduce the long-term average AoI is to narrow the search space of the policy. The authors in \cite{liu2021age, jones2022minimizing} focused on a class of stationary randomized policies, where each device is scheduled based on a time-invariant probability. These policies can be efficiently obtained through convex optimization using statistical channel state information (CSI) and can be deployed in a distributed manner. However, such stationary randomized policies are \textit{age-independent} because they do not leverage instantaneous system AoI, potentially sacrificing performance. Therefore, we are motivated to incorporate AoI into our scheduling policy. The rationale behind this approach can be intuitively illustrated with a simple 2-link scenario: suppose there are two links in a D2D network that cannot be scheduled simultaneously due to significant interference with each other, and scheduling either one causes the same level of interference on other links. In such a scenario, when only one of these two links can be scheduled, prioritizing the link with the larger AoI will result in a lower average AoI. Based on this insight, this paper aims to develop an \textit{age-aware} stationary randomized policy, where each link is scheduled with a probability determined by the instantaneous system AoI and statistical CSI. {To the best of our knowledge, this is the first work to develop an age-aware link scheduling policy for D2D networks using practical interference models.}

The main contributions of this paper are two-fold: (1) By applying the Lyapunov optimization technique \cite{neely2022stochastic}, we transform the long-term average AoI minimization problem into a per-slot Lyapunov drift minimization problem. We derive an explicit expression for the Lyapunov drift of the age-aware stationary randomized policy. (2) The per-slot Lyapunov drift minimization problem is non-convex, making it inefficient to solve with traditional optimization methods for real-time decision-making. {To circumvent this issue, we propose a {\textit{learn-to-optimize}} approach, which involves mapping the critical parameters of the problem directly to the solution using an artificial neural network.} To that end, we carefully analyze the permutation equivariance (PE) property of the optimal solutions to the per-slot problem and propose applying a message passing neural network (MPNN) \cite{gilmer2017neural}, a popular framework within Graph Neural Networks (GNNs), to directly learn the targeted policy. The MPNN is trained in an unsupervised manner and can produce the scheduling probability for each link based on the system AoI and statistical CSI. We conduct computer simulations to evaluate the age-aware stationary randomized policy generated by the MPNN in various D2D networks, demonstrating its superior performance compared to baseline approaches. Notably, the policy maintains robust performance even when generalized to large-scale D2D networks.

The remainder of this paper is organized as follows. Sec.~\ref{section2} describes the system model, the definition of age-aware stationary randomized policy, and problem formulation. Sec.~\ref{section3} derives the Lyapunov drift for the age-aware policy and elaborates on how to optimize it via MPNN. Sec.~\ref{section4} presents simulation results, and  Sec.~\ref{section5} concludes the paper.

\section{System Model and Problem Formulation}
\label{section2}

\subsection{System Model}

\begin{figure}[tbp]
\centering
\includegraphics[width =0.95 \linewidth]{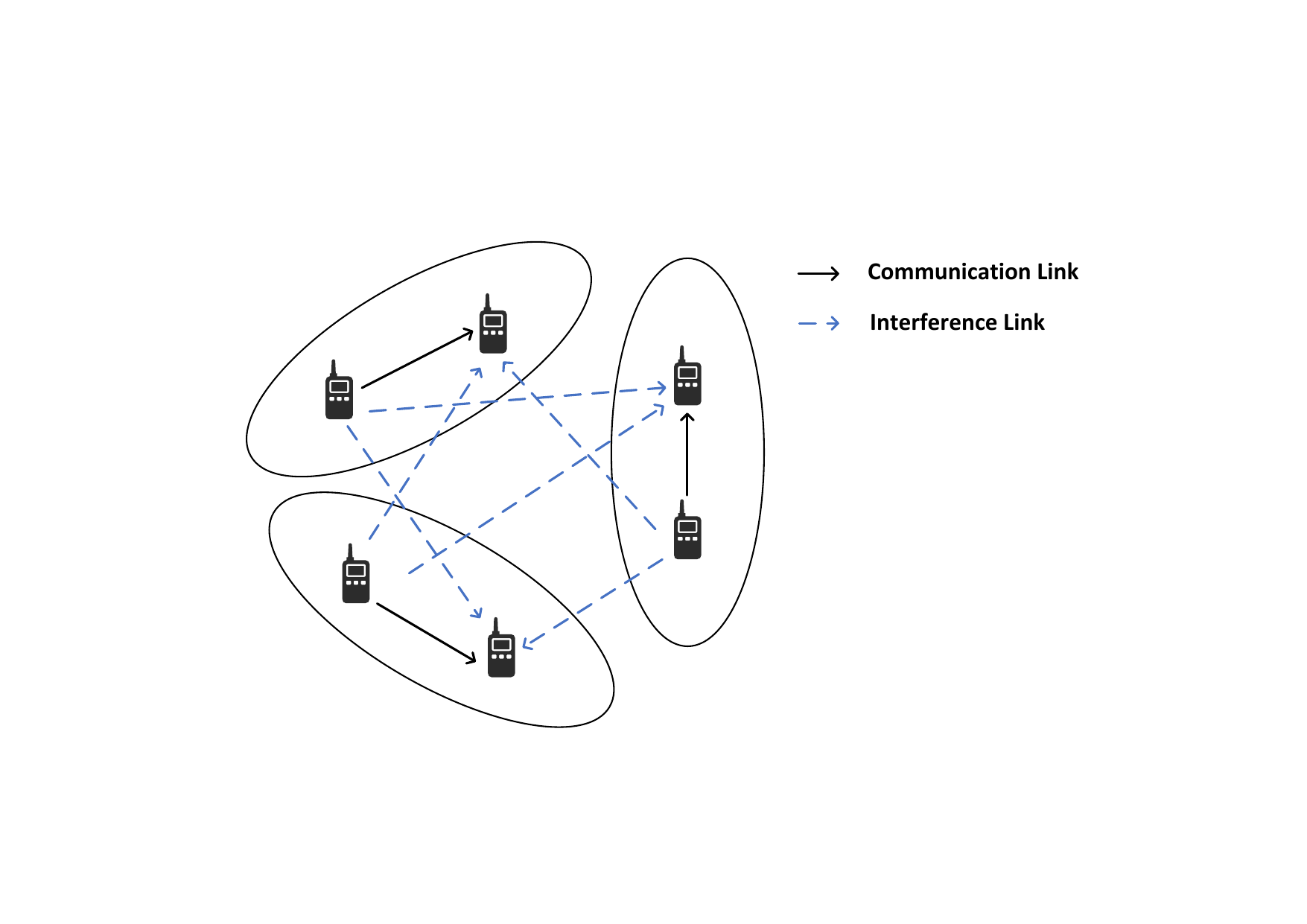}
    \caption{A D2D  communication network with $M = 3$ transmitter-receiver pairs.}
    \label{fig:systemmodel}
    \vspace{-1.5em}
\end{figure}

We consider a D2D wireless network where $M$ independently and randomly distributed transmitter-receiver pairs can simultaneously transmit their respective status updates, as illustrated in Fig.~\ref{fig:systemmodel}. Each transmitter-receiver pair forms a D2D communication link. We use $\mathcal{M} \triangleq \left\{1,2,...,M\right\}$ to denote the index set of D2D links, transmitters and receivers, i.e., link $i$ contains transmitter $i$ and receiver $i$.
Each transmitter and receiver is equipped with a single antenna.
The considered D2D network operates in a time-slotted manner, where the time synchronization among all D2D links could be achieved through the infrastructure (e.g., a base station). At the beginning of every slot, each transmitter can be scheduled to generate a new status update and transmit it to the receiver using the same frequency band. The entire status update process completes within one time slot. We assume that a centralized controller can be employed to schedule the status updates of D2D links, thereby alleviating the negative impact of excessive interference on successful status updates. Similar assumptions have been considered in \cite{yu2011resource, phunchongharn2013resource, liu2022graph}.

The channel gain between transmitter $i$ and receiver $j$ in slot $t$ is denoted by ${h}_{ij}(t) = \sqrt{{h}_{ij}^{l}}{h}_{ij}^{s}(t)$, where ${h}_{ij}^{l}$ denotes the time-invariant large-scale power gain, ${h}_{ij}^{s}(t)$ represents small-scale Rayleigh
fading channel gain, which follows a complex Gaussian distribution $\mathcal{C N}\left(0,1\right)$.
We assume the instantaneous channel state ${h}_{ij}(t)$ remains unchanged during each time slot but varies independently and identically distributed (\textit{i.i.d.}) across different time slots. Specifically, ${h}_{ii}(t)$ is the channel gain of the direct D2D communication link of the $i$-th transmitter-receiver pair, while $h_{ij}(t)$ represents the channel gain of the interference link between transmitter $i$ and receiver $j$, $j  \neq i$. Denote the large-scale power gain matrix by $\mathbf{H}^{l}$, where $\mathbf{H}^{l}_{(i,j)} = h_{ij}^l$. Hereafter, we refer to $\mathbf{H}^{l}$ as the statistical CSI, which is available at the centralized controller.

The link scheduling action in time slot $t$ is denoted by $\boldsymbol{a}(t) = [a_1(t), a_2(t), \ldots, a_M(t)]^T$, where $a_i(t)$ is a binary indicator which is equal to $1$ when transmitter $i$ is scheduled, and $0$ otherwise. Upon performing scheduling action $\boldsymbol{a}(t)$, the signal-to-interference-and-noise ratio (SINR) for receiver $i$ in the $t$-th time slot can be expressed as 
\begin{equation}
\label{SINR_def}
\xi_i(t)=\frac{P_{tx}\left|{h}_{ii}(t) \right|^2a_i(t)}{\sum_{j \neq i} P_{tx}\left|{h}_{ji} (t)\right|^2{a}_j(t)+\sigma^2},
\end{equation}
where $P_{tx}$ is the identical transmission power of every transmitter, $\sigma^2$ is the noise power at each receiver. 

We consider a widely adopted transmission model where a status update is successfully delivered only if the SINR of the receiver is not smaller than a pre-defined threshold $\beta$~\cite{baccelli2013adaptive,he2016optimizing,leng2019age,liu2021age,jones2022minimizing}. Denote by $d_i(t)$ the transmission result of device $i$ in slot $t$. We thus have $d_i(t)$ equals to 1 if $\xi_i(t) \ge \beta$, and 0 otherwise.
Given the transmission model, we are now ready to define the AoI of receiver $i$ as $g_i(t)$. Generally, AoI measures the time elapsed since the latest update received by the receiver was generated~\cite{kaul2012real}. 
Assuming that all scheduled devices can sample and generate fresh information packets for transmission, known as the \textit{generate-at-will} model~\cite{sun2017update},
the AoI $g_i(t+1)$ of receiver $i$ at the beginning of the $(t+1)$-th slot evolves as follows:
\begin{equation}
\label{equ_evolving_AoI}
g_i(t+1)= g_i(t) + 1 - g_i(t)d_i(t).
\end{equation}

\addtolength{\topmargin}{+0.03in}

\subsection{Problem Formulation}
Define the vector $\boldsymbol{g}(t) = [g_1(t), g_2(t),..., g_M(t)]^T$ to represent all receivers' AoI at the beginning of time slot $t$. Denote $\Pi$ as the set of non-anticipated policies, in which link scheduling action $\boldsymbol{a}(t)$ is made based on the scheduling decision history $\left\{\boldsymbol{a}(k)\right\}_{k=1}^{t-1}$, the evolution of AoI $\left\{\boldsymbol{g}(k)\right\}_{k=1}^{t}$, and statistical CSI $\mathbf{H}^{l}$.
{We use statistical CSI for decision-making instead of instantaneous CSI, because instantaneous CSI requires channel estimation for each link in each time slot, which incurs a large signaling overhead. In contrast, statistical CSI is easier to obtain and provides important information about the communication quality of different links.}
The objective is then to find the optimal scheduling policy $\pi \in \Pi$ such that the following long-term average AoI is minimized,
\begin{equation}
\label{problem_def}
 \min _{\pi \in \Pi}J(\pi), 
\end{equation}
where $ J(\pi) =  \lim _{T \rightarrow \infty} \frac{1}{T M} \mathbb{E}_{\pi}\left[\sum_{t=1}^T \sum_{i=1}^M g_i(t)\right]$, and the expectation is with respect to the random channel states and possible random operations in policy $\pi$. 

While problem~\eqref{problem_def} can be formulated as an MDP, searching for the optimal policy within the set $\Pi$ using conventional MDP-based methods is challenging due to the curse of dimensionality. To address this issue, we narrow the policy search space by focusing on the age-aware stationary randomized policy $\pi_A$. For convenience, we may also refer to this policy as age-aware policy in the following sections. 

Under the policy $\pi_A$, each transmitter $i$ will be scheduled independently with probability $p_i(t)$ at any time slot $t$. Define the probability vector $\boldsymbol{p}(t) = [p_1(t),p_2(t),...,p_M(t)]^T$, which is determined by the current AoI vector $\boldsymbol{g}(t)$ and statistical CSI $\mathbf{H}^{l}$. Allowing for some abuse of notation, we can also denote the mapping between $\boldsymbol{g}(t)$, $\mathbf{H}^{l}$ and $\boldsymbol{p}(t)$ as $\pi_{A}$, i.e., $\boldsymbol{p}(t) = \pi_A(\boldsymbol{g}(t),\mathbf{H}^{l})$. This mapping will not change over time, i.e., the policy is stationary.
Let $\Pi_{A}$ denote the set of all such age-aware policies. With this notation, we can reformulate the problem~\eqref{problem_def} as
follows: 
\begin{equation}
\label{problem_def_2}
 \min _{\pi_A \in \Pi_{A}}J(\pi_A).
\end{equation}
We remark that unlike age-independent stationary randomized policy $\pi_S$ developed in~\cite{liu2021age,jones2022minimizing}, the dynamic nature of AoI in our problem results in a time-varying probability vector. This complexity makes it challenging to derive a closed-form solution for $J(\pi_{A})$ and complicates the task of directly obtaining the optimal age-aware policy $\pi_A^\star$. In the following section, we first transform problem~\eqref{problem_def_2} by applying the Lyapunov optimization framework and then elaborate on how to find a well-performing and computationally efficient age-aware policy.

\section{Age-aware policy based on Lyapunov optimization and MPNN}

In this section, we first derive an explicit expression for the Lyapunov drift of the age-aware stationary randomized policy, and then prove that the solution to the Lyapunov drift minimization problem possesses the property of permutation equivariance. Finally, we introduce a learn-to-optimize method based on message passing neural network for real-time decision-making.

\label{section3}

\subsection{Lyapunov Drift for Age-aware Policy}
To proceed, we transform the long-term optimization problem~\eqref{problem_def_2} into a per-slot optimization problem using Lyapunov optimization~\cite{neely2022stochastic}. This approach is inspired by the Max-Weight policy proposed in \cite{kadota2018scheduling, kadota2021minimizing}, which minimizes per-slot Lyapunov drift to achieve near-optimal AoI performance. By designing a Lyapunov function and reducing its expected increase in each time slot, we can maintain the system in a desirable state, where each D2D pair has a relatively low AoI. Before demonstrating the problem transformation, we introduce some key definitions, including network state, Lyapunov function, and Lyapunov drift.

Denote by $S(t) = (\boldsymbol{g}(t), \mathbf{H}^l)$ the network state at the beginning of slot $t$ and define the Lyapunov function as  $L\left(S(t)\right):=\frac{1}{2} \sum_{i=1}^M g_i^2(t)$.
The Lyapunov drift, defined as the expected growth of Lyapunov function after one slot, is given by
\begin{equation}
\label{Drift}
\Delta\left(S(t)\right):=\mathbb{E}_{\pi_A}\left\{L\left(S(t+1)\right)-L\left(S(t)\right) \mid S(t)\right\},
\end{equation}
where the expectation is taken over channel randomness and the random operations in policy $\pi_{A}$.
We present the explicit expression of Lyapunov drift in the following theorem.

\begin{theorem}
\label{lemma1}
    At the beginning of time slot $t$, Lyapunov drift can be expressed as follows:
  \begin{equation}
     \Delta\left(S(t)\right)\! =\!-\sum_{i=1}^M W_i(t)\rho_i p_i{(t)} \prod_{j \neq i}\left(1\!-\!\frac{p_j{(t)}}{1+D_{j i}}\right)\! +\! B(t),
  \end{equation}
where $\rho_i=\exp \left(- {\beta \sigma^2}/({P_{t x}h_{ii}^l}) \right)$, $D_{ji} = {h_{ii}^l}/({\beta h_{ji}^l})$, $W_i(t)=\frac{1}{2} g_i(t)(g_i(t)+2)$, $B(t)=\sum_{i=1}^M(g_i(t)+\frac{1}{2})$, and $p_i{(t)}$ is the $i$-th entry of probability vector $\boldsymbol{p}{(t)} = \pi_{A}(\boldsymbol{g}(t),\mathbf{H}^{l})$.  
\end{theorem}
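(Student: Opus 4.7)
The plan is to expand the Lyapunov drift by evaluating $L(S(t+1))-L(S(t))$ term by term, then compute the conditional success probability $\mathbb{E}[d_i(t)\mid S(t)]$ exploiting the exponential statistics induced by Rayleigh fading together with the independence of the scheduling Bernoullis.

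First I would exploit the fact that $d_i(t)\in\{0,1\}$ to write the AoI evolution \eqref{equ_evolving_AoI} in a way that linearises $g_i^2(t+1)$ in $d_i(t)$. Squaring gives $g_i^2(t+1)=(g_i(t)+1)^2-g_i(t)(g_i(t)+2)\,d_i(t)$, so summing over $i$ yields
\begin{equation*}
L(S(t{+}1))-L(S(t))=\sum_{i=1}^{M}\bigl(g_i(t)+\tfrac{1}{2}\bigr)-\sum_{i=1}^{M} W_i(t)\,d_i(t).
\end{equation*}
Taking conditional expectation, the deterministic part becomes $B(t)$, and the task reduces to computing $\mathbb{E}_{\pi_A}[d_i(t)\mid S(t)]$ for each $i$.

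The next step is to handle $\mathbb{E}[d_i(t)\mid S(t)]$ by conditioning sequentially on the scheduling action and then on the interference. Factoring out the event $\{a_i(t)=1\}$ gives a factor $p_i(t)$. Given $a_i(t)=1$ and the vector $\boldsymbol{a}_{-i}(t)$, the event $\{\xi_i(t)\geq\beta\}$ in \eqref{SINR_def} becomes $|h_{ii}(t)|^2\geq \tfrac{\beta}{P_{tx}}(\sum_{j\neq i}P_{tx}|h_{ji}(t)|^2 a_j(t)+\sigma^2)$. Since $|h_{ii}(t)|^2$ is exponential with mean $h_{ii}^l$, the tail probability conditioned on the interference channels is the exponential of this threshold divided by $h_{ii}^l$, which cleanly splits into the noise factor $\rho_i$ times a product over interferers. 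Averaging each $|h_{ji}(t)|^2$ (exponential with mean $h_{ji}^l$, independent across $j$) against that exponential weight uses the Laplace transform identity $\mathbb{E}[e^{-sX}]=1/(1+sh_{ji}^l)$ with $s=\beta/h_{ii}^l$, producing a factor $D_{ji}/(1+D_{ji})$ for every $j$ with $a_j(t)=1$ and a factor $1$ otherwise.

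Finally I would average over the independent Bernoulli schedules $a_j(t)$, $j\neq i$, each with mean $p_j(t)$. For each $j$ the expectation of $\bigl(D_{ji}/(1+D_{ji})\bigr)^{a_j(t)}$ equals $1-p_j(t)+p_j(t)\,D_{ji}/(1+D_{ji})=1-p_j(t)/(1+D_{ji})$. Multiplying the factors and combining with the $p_i(t)\rho_i$ prefactor yields exactly the term subtracted from $B(t)$ in the claimed expression, which completes the derivation.

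The main obstacle I anticipate is keeping the chain of conditional expectations rigorous: one must verify that the $|h_{ji}(t)|^2$ are mutually independent and independent of $|h_{ii}(t)|^2$ and of the scheduling randomness, and that the Rayleigh channels are independent of $S(t)$ (which follows from the i.i.d.\ across-slot assumption). Once this independence structure is justified, the remaining manipulations are routine algebra with Laplace transforms of exponentials and with Bernoulli moments.
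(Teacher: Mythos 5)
Your proposal is correct and follows essentially the same route as the paper's Appendix A: the same quadratic expansion of the drift yielding $B(t)-\sum_i W_i(t)\mathbb{E}[d_i(t)\mid S(t)]$, the same factorization of the success probability into $p_i(t)$ times a conditional tail probability of the exponential direct-link gain, and the same per-interferer product structure. The only (immaterial) difference is the order of averaging — you integrate out the interference fading via the Laplace transform before the Bernoulli schedules, whereas the paper averages over the Bernoullis first — and both orders are justified by the independence you correctly flag.
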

\begin{proof}See Appendix \ref{appendixA}.\end{proof}

With Theorem~\ref{lemma1}, we now need to optimize the following problem in every slot $t$ to minimize Lyapunov drift in~\eqref{Drift}:
\begin{equation}
\label{problem1_def}
\begin{aligned}
&\min_{\boldsymbol{p}}\!f(\boldsymbol{p}, \boldsymbol{w},\mathbf{H}^{l} )\! =\!-\sum_{i=1}^M \!W_i(t) \rho_i p_i\! \prod_{j \neq i}\!\left(1-\frac{p_j}{1+D_{j i}}\right)\!, \\
&\text { s.t. }\boldsymbol{p}= [p_1,...,p_M]^{T},\boldsymbol{w}= [W_1(t),...,W_M(t)]^{T}, \\
&\quad \quad0 \leq p_i{} \leq 1,\forall i \in \mathcal{M} . 
\end{aligned}
\end{equation}
Let $\hat{\pi}_A^\star$ denote the age-aware policy minimizing~\eqref{problem1_def} in every slot. 
To obtain $\hat{\pi}_{A}^\star$, it is essential to solve problem~\eqref{problem1_def} in every slot. However, problem~\eqref{problem1_def} can be non-convex for our interference model, posing significant challenges in achieving a global optimum. As such, traditional optimization methods, such as numerical optimization and iterative algorithms, may fail to meet the real-time decision-making requirements due to their high computational complexity. We thus propose adopting a {learn-to-optimize} approach, where the critical parameters $\{\boldsymbol{w}, \mathbf{H}^l\}$ of the optimization problem are mapped to the solution $\boldsymbol{p}$ using an artificial neural network. 

Numerous studies have shown that leveraging the structure of the optimization problem's solution to design the corresponding neural network can lead to improved optimization results~\cite{eisen2020optimal,shen2020graph,guo2021learning}. For example, in~\cite{shen2020graph}, the authors discovered that the solution of a weighted sum rate maximization problem exhibits the permutation
equivariance (PE) property. As a result, they were able to utilize message passing neural network (MPNN), a popular GNN framework which also exhibits the PE property, to solve the problem with both high accuracy and scalability. Inspired by this line of research, we investigate whether the solution to problem~\eqref{problem1_def} possesses the PE property and present the following theorem.

\begin{theorem}
\label{PE}
   Considering a D2D network $\mathcal{D}$, problem~\eqref{problem1_def} at the beginning of time slot $t$ can be formulated with critical parameters $\{\boldsymbol{w}, \mathbf{H}^l\}$, and the set of all the optimal solutions to the corresponding problem is denoted by $\Upsilon$. Then, if some permutation operations are applied to the D2D network $\mathcal{D}$, i.e., the reordering of the index of each D2D pair, the corresponding permuted D2D network is denoted by $\mathcal{D}^\prime$ and the reformulated problem~\eqref{problem1_def} at slot $t$ will have the critical parameters $\{\mathbf{P}\boldsymbol{w}, \mathbf{P}\mathbf{H}^l\mathbf{P}^{T}\}$, where permutation matrix $\mathbf{P} \in\{0,1\}^{M \times M}$ satisfies $\mathbf{P}\mathbf{1}=\mathbf{1}$ and $\mathbf{P}^T\mathbf{1}=\mathbf{1}$. Denote by $\Upsilon^{\prime}$ the set of all optimal solutions to the reformulated problem~\eqref{problem1_def}, $\Upsilon$ is permutation equivariant with $\Upsilon^\prime$, i.e., 
$$
\Upsilon^\prime = \left\{\mathbf{P}\boldsymbol{\mu}\mid \boldsymbol{\mu}\in \Upsilon\right\}.
$$
\end{theorem}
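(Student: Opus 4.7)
The plan is to reduce the set-equality to an invariance statement about the objective: I will show that
\[
f\bigl(\mathbf{P}\boldsymbol{p},\,\mathbf{P}\boldsymbol{w},\,\mathbf{P}\mathbf{H}^l\mathbf{P}^T\bigr)
\;=\; f\bigl(\boldsymbol{p},\,\boldsymbol{w},\,\mathbf{H}^l\bigr),
\]
for every feasible $\boldsymbol{p}$ and every permutation matrix $\mathbf{P}$, and that the feasible set $\{0\le p_i\le 1\}$ is itself permutation-invariant (coordinate-wise box constraints commute with any coordinate reordering). Once both are in hand, the two inclusions $\{\mathbf{P}\boldsymbol{\mu}:\boldsymbol{\mu}\in\Upsilon\}\subseteq\Upsilon'$ and $\Upsilon'\subseteq\{\mathbf{P}\boldsymbol{\mu}:\boldsymbol{\mu}\in\Upsilon\}$ follow from standard optimality arguments, the second by inverting the permutation using $\mathbf{P}^{-1}=\mathbf{P}^T$.

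First I would fix the convention $(\mathbf{P}\boldsymbol{v})_i = v_{\sigma^{-1}(i)}$ and $(\mathbf{P}\mathbf{H}^l\mathbf{P}^T)_{ij}=h^l_{\sigma^{-1}(i),\sigma^{-1}(j)}$ for the underlying permutation $\sigma$. Then I would read off how each ingredient of $f$ transforms for the permuted problem: the diagonal entries of $\mathbf{P}\mathbf{H}^l\mathbf{P}^T$ permute as $h^l_{ii}\mapsto h^l_{\sigma^{-1}(i)\sigma^{-1}(i)}$, so $\rho_i$ maps to $\rho_{\sigma^{-1}(i)}$; similarly $D_{ji}$ maps to $D_{\sigma^{-1}(j),\sigma^{-1}(i)}$; and $W_i$, being the $i$-th entry of $\boldsymbol{w}$, maps to $W_{\sigma^{-1}(i)}$. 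Writing out $f(\mathbf{P}\boldsymbol{p},\mathbf{P}\boldsymbol{w},\mathbf{P}\mathbf{H}^l\mathbf{P}^T)$ and making the change of indices $k=\sigma^{-1}(i)$, $l=\sigma^{-1}(j)$, the outer sum becomes $-\sum_k W_k\rho_k p_k \prod_{l\neq k}(1-p_l/(1+D_{lk}))$ because $\sigma$ is a bijection on $\mathcal{M}$ and $j\neq i \Leftrightarrow l\neq k$. This is exactly $f(\boldsymbol{p},\boldsymbol{w},\mathbf{H}^l)$, giving the desired invariance.

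Next I would conclude set-equivariance in two short steps. For any $\boldsymbol{\mu}\in\Upsilon$, feasibility of $\mathbf{P}\boldsymbol{\mu}$ in the permuted problem is immediate, and the invariance identity implies $f(\mathbf{P}\boldsymbol{\mu},\mathbf{P}\boldsymbol{w},\mathbf{P}\mathbf{H}^l\mathbf{P}^T)=f(\boldsymbol{\mu},\boldsymbol{w},\mathbf{H}^l)$ attains the optimum, so $\mathbf{P}\boldsymbol{\mu}\in\Upsilon'$; this gives $\{\mathbf{P}\boldsymbol{\mu}:\boldsymbol{\mu}\in\Upsilon\}\subseteq\Upsilon'$. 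For the reverse inclusion, given any $\boldsymbol{q}\in\Upsilon'$, apply the same invariance to the permutation $\mathbf{P}^T$ (which maps the permuted instance back to the original): $\mathbf{P}^T\boldsymbol{q}\in\Upsilon$, and $\boldsymbol{q}=\mathbf{P}(\mathbf{P}^T\boldsymbol{q})$ lies in $\{\mathbf{P}\boldsymbol{\mu}:\boldsymbol{\mu}\in\Upsilon\}$.

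The main obstacle is bookkeeping rather than a genuine mathematical difficulty. The cross-link term $D_{ji}$ involves \emph{both} a diagonal element and an off-diagonal element of $\mathbf{H}^l$, so one must verify that these transform consistently under the conjugation $\mathbf{H}^l\mapsto\mathbf{P}\mathbf{H}^l\mathbf{P}^T$; fortunately conjugation permutes rows and columns by the same $\sigma$, which is exactly what the definition of $D_{ji}$ needs. A related subtlety is that the product $\prod_{j\neq i}$ has a structured index set, and one must check that the bijection $\sigma$ preserves $\{j\neq i\}\leftrightarrow\{l\neq k\}$ — which it does, but it is the place where $\mathbf{P}$ being a \emph{permutation} matrix (not merely doubly stochastic, even though the conditions $\mathbf{P}\mathbf{1}=\mathbf{1}$ and $\mathbf{P}^T\mathbf{1}=\mathbf{1}$ are stated) is actually used.
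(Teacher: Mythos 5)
Your proposal is correct and follows essentially the same route as the paper: establish permutation invariance of $f$ under the joint transformation $(\boldsymbol{p},\boldsymbol{w},\mathbf{H}^l)\mapsto(\mathbf{P}\boldsymbol{p},\mathbf{P}\boldsymbol{w},\mathbf{P}\mathbf{H}^l\mathbf{P}^T)$ (the paper's Lemma on permutation invariance, which you verify with somewhat more explicit index bookkeeping), then deduce the set equality via the two inclusions, using $\mathbf{P}^T=\mathbf{P}^{-1}$ for the reverse direction exactly as the paper does.
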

\begin{proof}See Appendix \ref{appendix}.\end{proof}

The above advantageous property of problem~\eqref{problem1_def} motivates us to use MPNN to solve it. We will detail this in the next subsection.

\subsection{Optimization via MPNN}
\begin{figure}
\centering
\includegraphics[width =1 \linewidth]{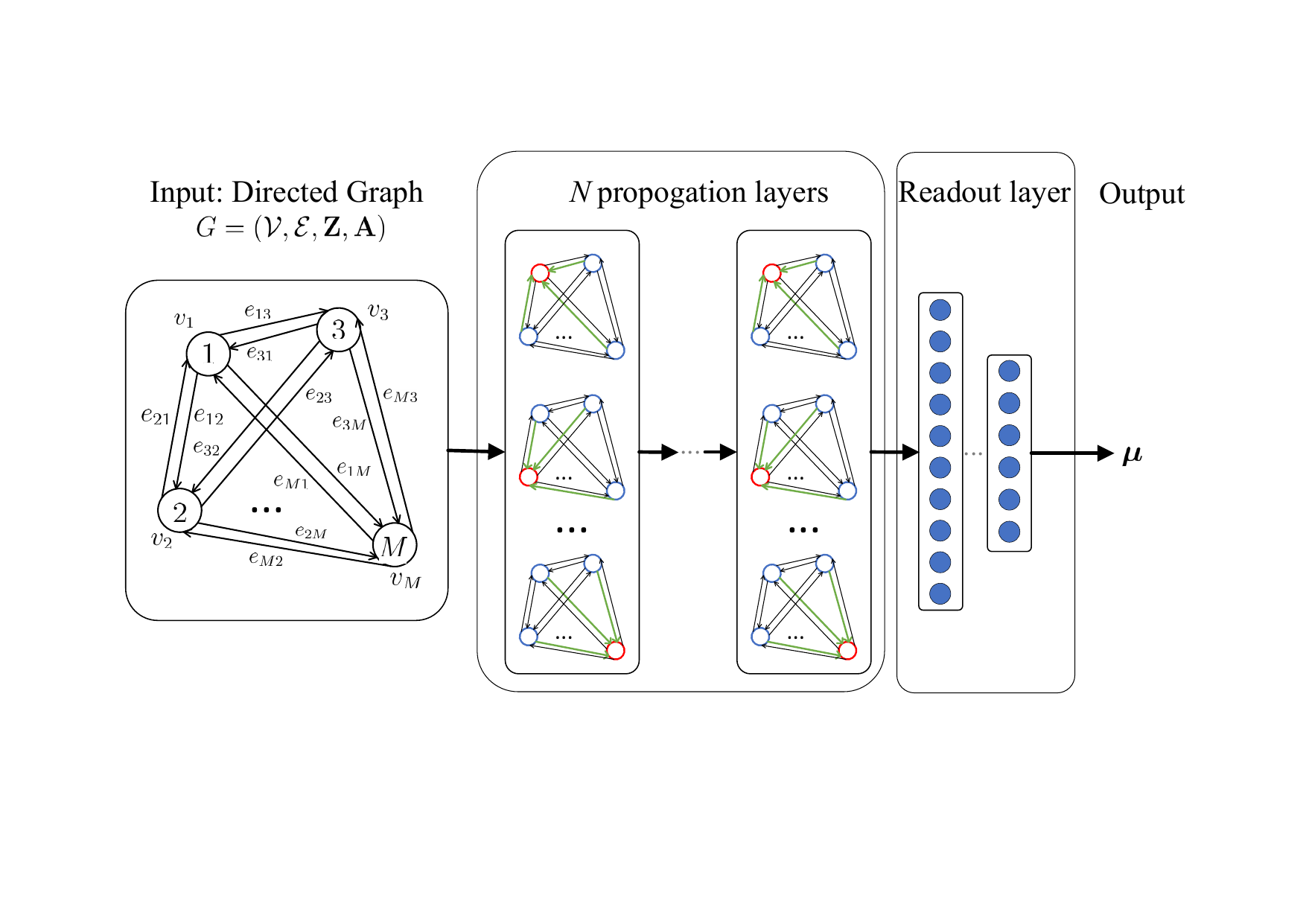}
    \caption{Structure of MPNN for the graph optimization problem.}
    \label{fig:MPNN}
    \vspace{-1em}
\end{figure}
As a type of GNN, MPNN only accepts data input with a graph structure. To proceed, we need to reconsider  problem~\eqref{problem1_def} from the perspective of a directed graph $G$ and transform it into an equivalent graph optimization problem.

\begin{figure}
\centering
\includegraphics[width =1 \linewidth]{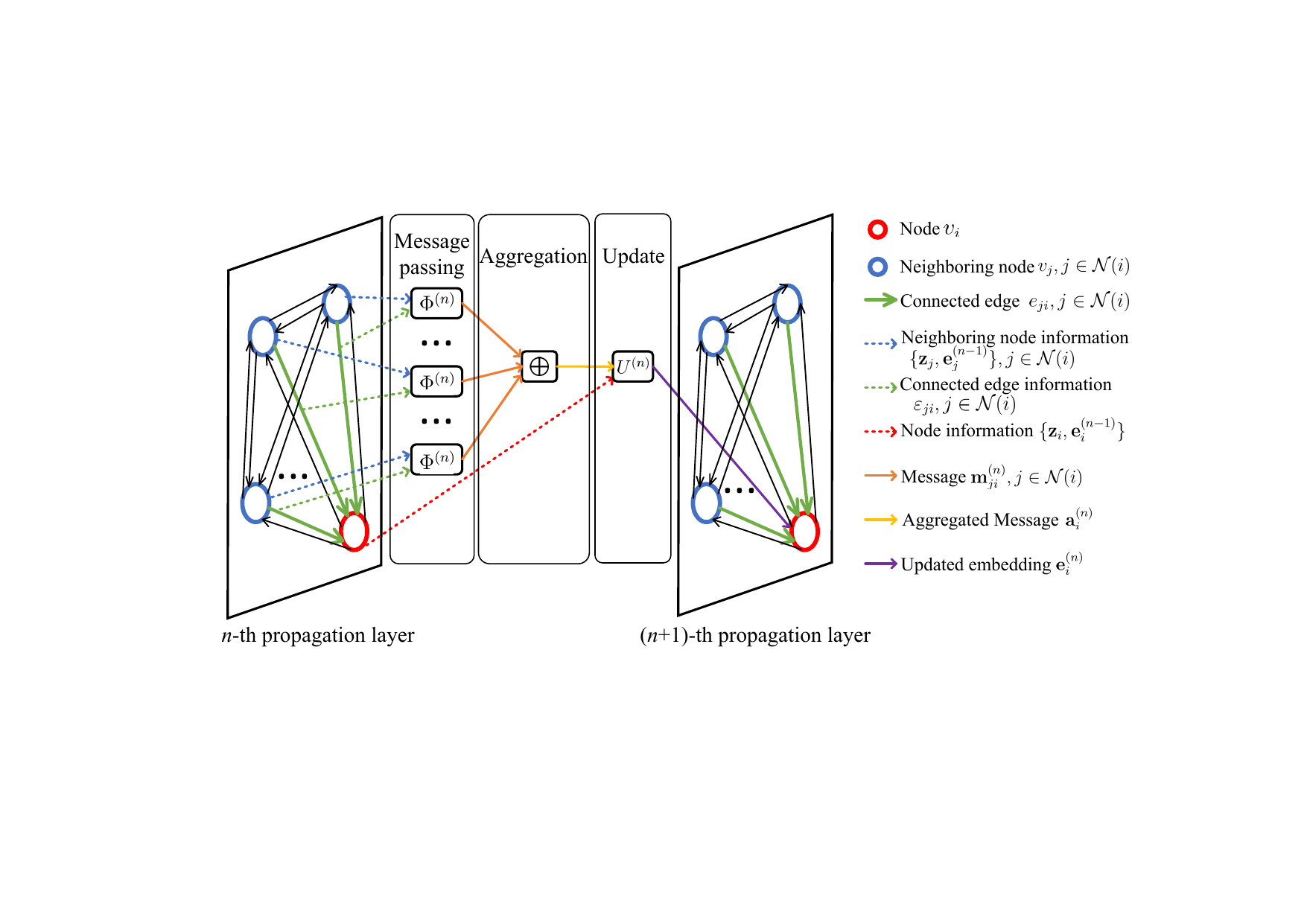}
    \caption{The propagation process of node $v_i$ in the $n$-th propagation layer.}
    \label{fig:MessagePassing}
    \vspace{-2em}
\end{figure}

The $i$-th D2D pair can be viewed as the $i$-th node $v_i$ in the graph $G$. The set of all nodes is denoted by $\mathcal{V}$. The feature of node $v_i$ is defined as $\mathbf{z}_i = [w_i, {h_{ii}^l}]$,  and node feature matrix is defined as $\mathbf{Z} = [\mathbf{z}_1^{T},\mathbf{z}_2^{T}...,\mathbf{z}_M^{T}]^{T} \in \mathbb{R}^{M\times2}$. The weight $w_i = {W_i(t)}/{\max\{W_1(t),W_2(t)...,W_M(t)\}}$, which is a normalization for $W_i(t)$ to facilitate neural network training and inference. The interference link from transmitter $i$ to receiver $j$ can be viewed as the edge $e_{ij}$ from node $i$ to node $j$, and the edge feature $\varepsilon_{ij}$ is ${h_{ij}^l}$. The set of edges is denoted by $\mathcal{E}$. The adjacency feature matrix $\mathbf{A} \in \mathbb{R}^{M\times M}$ can be defined~as
\begin{equation}
\mathbf{A}_{(i,j)}= 
\begin{cases}
0, & \text { if }\{i, j\} \notin \mathcal{E}, \\ 
\varepsilon_{ij}, & \text { otherwise. }
\end{cases}
\end{equation}

Based on all the definitions above, the directed graph associated with our problem can be represented as $G = (\mathcal{V},\mathcal{E},\mathbf{Z},\mathbf{A})$. 
Denote the graph optimization variable by $\boldsymbol{\mu} = [\mu_1, \mu_2, ..., \mu_M]^T \in [0,1]^M$, where $\mu_i$ represents the scheduling probability associated with node $v_i$. Based on the directed graph $G  = (\mathcal{V},\mathcal{E},\mathbf{Z},\mathbf{A})$, the problem~\eqref{problem1_def} in every slot can be reformulated as the following equivalent graph optimization problem: 
\begin{equation}
\label{problem2_def}
\begin{aligned}
& \min _{\boldsymbol{\mu}}   \hat{f}(\boldsymbol{\mu},\mathbf{Z},\mathbf{A})\! = \!-\!\sum_{i=1}^M \mathbf{Z}_{(i,1)} \rho_i \mu_{i} \prod_{j \neq i}\left(1\!-\!\frac{\mu_j}{1+D_{j i}}\right), \\
&\text { s.t. }\rho_i=\exp \left(- \frac{\beta \sigma^2}{P_{t x}\mathbf{Z}_{(i,2)}} \right), \quad D_{ji} = \frac{\mathbf{Z}_{(i,2)}}{\beta \mathbf{A}_{(j,i)} },\\
&\quad \quad \boldsymbol{\mu} = [\mu_1, \mu_2, ..., \mu_M]^T, 0 \leq \mu_i{} \leq 1,\forall i \in \mathcal{M}.\end{aligned}
\end{equation}
As will become clear later, formulating the graph problem~\eqref{problem2_def} will facilitate the training process of the MPNN.

To proceed, we provide a brief introduction of MPNN. An MPNN consists of multiple propagation layers and a readout layer\cite{gilmer2017neural}. The structure of the MPNN is illustrated in Fig.~\ref{fig:MPNN}. Each propagation layer consists of a message passing phase, an aggregation phase, and an update phase. The readout layer may have different forms depending on the specific graph-based task. The primary objective of MPNN is to acquire an informative embedding vector for every node in the graph through the propagation layers and then generate a target value via the readout layer. 
During the propagation phase, a node updates its embedding using its own feature and embedding, and the features of neighboring nodes and connecting edges, and the embeddings of neighboring nodes. The detailed process is depicted in Fig.~\ref{fig:MessagePassing}.
Let $\mathbf{e}_{i}^{(n)} \in \mathbb{R}^E$ denote the embedding representation of node $i$ after the $n$-th propagation layer, where $E$ is the node embedding size. For the $0$-th layer or the input layer, the graph embedding of each node is initialized to zero vector, i.e., $\mathbf{e}_{i}^{(0)}$ = \textbf{0}. Then the update rule of $\mathbf{e}_{i}^{(n)}$ can be described as 
\begin{equation}
\label{MPNN_update}
\begin{aligned}
\text{Message:}&\quad\mathbf{m}_{ji}^{(n)} = \Phi^{(n)}\left(\mathbf{e}_{j}^{(n-1)}, \mathbf{z}_j, \varepsilon_{ji}\right), \forall j \in \mathcal{N}(i),\\ 
\text{Aggregation:}&\quad\mathbf{a}_{i}^{(n)}= \bigoplus_{j \in \mathcal{N}(i)}\mathbf{m}_{ji}^{(n)} ,\\  
\text{Update:}&\quad\mathbf{e}_{i}^{(n)}=U^{(n)}\left(\mathbf{e}_{i}^{(n-1)},\mathbf{z}_i, \mathbf{a}_{i}^{(n)} \right),\nonumber
\end{aligned}
\end{equation}
where $\bigoplus$ denotes a differentiable, permutation invariant aggregation function such as \textit{SUM}, \textit{MIN} or \textit{MAX}, $\mathcal{N}(i)$ denotes the set of neighbors of $i$. The message passing function $\Phi^{(n)}(\cdot)$ and update function $U^{(n)}(\cdot)$ are multi-layer perceptions (MLPs) with parameters $\varphi_n$ and $\psi_n$, respectively. If not specified, the MLPs in this paper all adopt ReLU as the activation function. Aggregation function \textit{MAX} is adopted in this paper for its robustness to feature corruptions\cite{qi2017pointnet}. In MPNN, we will set up $N$ propagation layers and $N$ should not be too large to avoid over-smoothing, i.e., all the nodes will have similar embeddings. An efficient design is that we let the message passing functions and update functions of $N$ layers share the same MLP parameters $\varphi$ and $\psi$ respectively.
After $N$-layer propagation, the readout layer needs to map the learned embedding $\mathbf{e}_{i}^{(N)}$ to the scheduling probability $\mu_{i}$ of node $i$. Since $\mu_{i} \in [0,1]$, the readout layer can be designed as an MLP with parameters $\Omega$ and a Sigmoid activation function.

The entire MPNN is trained using the unsupervised learning paradigm. It employs the loss function 
$\mathbb{E}\left\{\hat{f}(\boldsymbol{\mu}(\varphi,\psi,\Omega),\mathbf{Z},\mathbf{A})\right\}$ and utilizes stochastic gradient descent to update network parameters $\{\varphi,\psi,\Omega\}$, where $\hat{f}$ is defined in the graph problem~\eqref{problem2_def}. During the training phase, a batch of different training samples $\{\mathbf{Z},\mathbf{A}\}$ is used to compute gradients. During the inference phase, we use statistical CSI and the system's AoI to generate the randomized scheduling policy for each time slot.

\section{Simulation Results}
\label{section4}

In this section, we compare the performance of the proposed age-aware policy based on Lyapunov optimization and MPNN with the following baselines:
\begin{itemize}
    \item \textbf{Optimal Age-independent Stationary Randomized policy}\cite{liu2021age,jones2022minimizing}: Obtain $\pi_{S}^\star$ by minimizing $J(\pi_{S})$ via convex optimization.
    \item \textbf{Optimal Proportional Fairness}\cite{jones2022minimizing}: An age-independent stationary randomized policy $\pi_{S}$ that minimizes the sum log of average AoI.
    \item \textbf{Greedy Policy}: Only schedule the D2D pair with the highest AoI.
\end{itemize}

\begin{figure}[tbp]
\centering
\includegraphics[width =1 \linewidth]{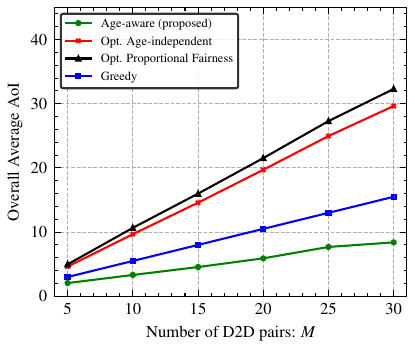}
    \caption{Overall AoI performance comparison for different $M$ over 500 layouts.}
    \label{trained_performance}
    \vspace{-1em}
\end{figure}

\subsection{Parameter Setting}
The D2D network is within an $L \times L $ area, where $L$ is the area length. The link density is set to ${M}/{L^2}$ links/$\text{m}^2$. If not specified, $L = 500\mathrm{m}$ for all the training and testing scenarios. 
All transmitters are randomly distributed in the area and each receiver is uniformly distributed within $2\rm{m}\sim40\rm{m}$ from the corresponding transmitter. 
The D2D communication channel follows ultra high frequency model in ITU-1411~\cite{series2017propagation}.
The large-scale power gain in dB, i.e., path-loss, is captured~by
\begin{equation}
\label{pl}
L_{p l}=L_{b p}+6+\left\{\begin{array}{l}
20 \log _{10}\left(\frac{d}{R_{b p}}\right), \text { for } \quad d \leq R_{b p} \\
40 \log _{10}\left(\frac{d}{R_{b p}}\right), \text { for } \quad d>R_{b p}
\end{array},\right.
\end{equation}
where $d$ is the distance between the transmitter and the receiver, $R_{b p}$ is the breakpoint distance defined as $ R_{b p}= \frac{4 h_1 h_2}{\lambda}$ with $h_1$ and $h_2$ being antenna heights of the transmitter and receiver, and $\lambda$ is the carrier wavelength. $L_{b p}$ in \eqref{pl} is defined as $L_{b p}=\left|20 \log _{10}\left(\frac{\lambda^2}{8 \pi h_1 h_2}\right)\right|$, representing the basic transmission loss at the breakpoint. In our simulation, the used carrier frequency is $2.4$ GHz, with a bandwidth of $5$ MHz. 
The antenna heights of both transmitter and receiver are set to be $1.5$m, and the antenna gain is $2.5$ dBi. The transmission power $P_{tx}$ is $40$ dBm, and the noise power spectral density is $-169$ dBm/Hz. The threshold $\beta = 1023$.
We generate weight $w_i$ from a uniform distribution in $[0, 1]$ for any $i \in \mathcal{M}$. 

\begin{figure}[tbp]
\centering
\includegraphics[width =0.95 \linewidth]{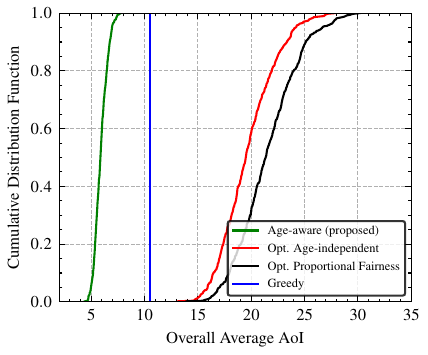}
    \caption{CDF of overall average AoI over 500 layouts: $M = 20$.}
    \label{CDF_20links}
    \vspace{-2em}
\end{figure}

For the parameter setting of MPNN, we set the dimension $E$ of node embedding to $8$, the layer number $N = 3$, the number of hidden neurons in all the message passing functions as $\{11,32,32\}$, the number of hidden neurons in all the update functions as $\{42,16,8\}$, the number of hidden neurons in readout layer as $\{8, 16, 1\}$. We used the Adam optimizer~\cite{kingma2014adam} with a learning rate of $0.002$ and a decay factor of $0.9$. The training process consists of 100 epochs of batch training with batch size $50$. For the input data of MPNN, we note that the interference link is considered only if the transmitter $i$ and receiver $j$ are within $500$ meters. For the data normalization, both the large-scale power gain $h_{ii}^l$ in node feature and the large-scale power gain $h_{ij}^l$ in edge feature are in dB and normalized by z-score $\Tilde{x}= (x - \Bar{x})/\hat{x}$, where $x$ is the input data, $\Bar{x}$ is the average value of data, $\hat{x}$ is the corresponding standard deviation. The weight $w_i$ is normalized by $\Tilde{w}_i = {w_i}/{\max\{w_1,w_2...,w_M\}}$.

\vspace{-1em}

\subsection{Simulation Results}


The long-term average AoI performance of each scheduling policy is evaluated over 20,000 time slots for every given D2D layout. For our proposed age-aware policy, the MPNN for solving problem~\eqref{problem2_def} is trained with 50000 samples of $\{\mathbf{Z},\mathbf{A}\}$ for $M = 5,10,15,20,25,30$ in $500 \text{m}\times 500 \text{m}$ area, each of which contains independently randomly generated D2D layout and weights. In the testing phase, the trained MPNN generates the scheduling probability vector at the beginning of each time slot, using statistical CSI and the current system AoI as inputs.
By averaging 500 testing layouts, Fig.~\ref{trained_performance} depicts the overall AoI performance of the age-aware policy and other baselines versus $M$. It can be observed that the proposed policy outperforms other policies, while the optimal age-independent stationary randomized policy can even exhibit worse performance than the greedy policy. One possible reason is that the optimal age-independent stationary randomized policy may often schedule links with low AoI while ignoring links with very high AoI. This observation validates the importance of incorporating AoI for link scheduling. Moreover, We also present the cumulative distribution function (CDF) of overall AoI performance in Fig.~\ref{CDF_20links} for $M = 20$ over 500 D2D testing layouts. The CDF curve of the age-aware policy always locates in the upper left compared to other policies, demonstrating the stable inference performance of the MPNN.

Next, we evaluate the generalization capability of the proposed method, specifically whether the MPNN trained on D2D networks with tens of D2D pairs can be directly deployed on D2D networks with hundreds of D2D pairs without fine-tuning.  To this end, we train the MPNN in a $500 \text{m}\times 500 \text{m}$ layout with 20 D2D pairs using 50000 training samples and test its scalability in the following two cases: same link density and different link densities. 

For the same link density case, we test the trained MPNN across 100 testing layouts, each containing different numbers of D2D pairs, {with $L$ increasing as $M$ increases}. The results are shown in Fig.~\ref{generalize_same}. Although the number of D2D pairs in the testing stage is different from that in the training stage, the performance of the age-aware policy remains stable and significantly outperforms the baselines.

For the different link density cases, we test the trained MPNN in 100 layouts, all within a $500 \text{m}\times 500\text{m}$ area but with varying numbers of D2D pairs. The results, shown in Fig.~\ref{generalize_diff}, indicate that the age-aware policy maintains a performance advantage over the baselines even when the link density is increased tenfold. These results collectively demonstrate the scalability of MPNN in terms of AoI performance.

\begin{figure}[tbp]
\centering
\includegraphics[width =0.95 \linewidth]{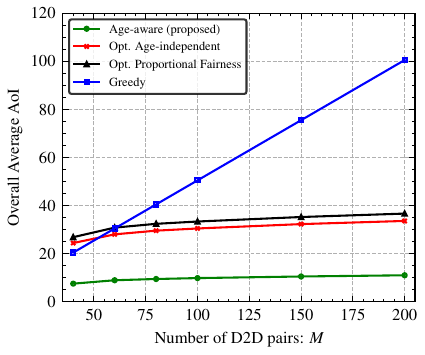}
    \caption{Generalize to the same density.}
    \label{generalize_same}
    \vspace{-1em}
\end{figure}
\begin{figure}[tbp]
\centering
\includegraphics[width =0.95 \linewidth]{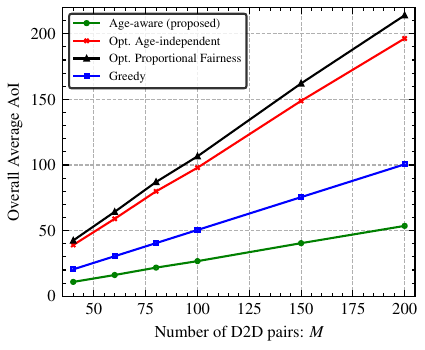}
    \caption{Generalize to different densities.}
    \label{generalize_diff}
    \vspace{-2em}
\end{figure}

\section{Conclusion}
\label{section5}
In this paper, we developed an age-aware link scheduling policy for long-term average AoI minimization in device-to-device (D2D) networks. Specifically, we proposed a message passing neural network (MPNN)-based method to generate the age-aware stationary randomized policy for real-time link scheduling, using only statistical CSI and system age of information (AoI). The MPNN was trained in an unsupervised manner by minimizing the Lyapunov drift of the original long-term average AoI minimization problem.
The performance of the age-aware stationary randomized policy has been demonstrated to be stable and superior to several baselines. Additionally, the proposed method exhibits strong scalability capability.




\bibliographystyle{./IEEEtran}
\bibliography{./IEEEexample}

\begin{thebibliography}{10}
\providecommand{\url}[1]{#1}
\csname url@samestyle\endcsname
\providecommand{\newblock}{\relax}
\providecommand{\bibinfo}[2]{#2}
\providecommand{\BIBentrySTDinterwordspacing}{\spaceskip=0pt\relax}
\providecommand{\BIBentryALTinterwordstretchfactor}{4}
\providecommand{\BIBentryALTinterwordspacing}{\spaceskip=\fontdimen2\font plus
\BIBentryALTinterwordstretchfactor\fontdimen3\font minus \fontdimen4\font\relax}
\providecommand{\BIBforeignlanguage}[2]{{%
\expandafter\ifx\csname l@#1\endcsname\relax
\typeout{** WARNING: IEEEtran.bst: No hyphenation pattern has been}%
\typeout{** loaded for the language `#1'. Using the pattern for}%
\typeout{** the default language instead.}%
\else
\language=\csname l@#1\endcsname
\fi
#2}}
\providecommand{\BIBdecl}{\relax}
\BIBdecl

\bibitem{kaul2012real}
S.~Kaul, R.~Yates, and M.~Gruteser, ``Real-time status: How often should one update?'' in \emph{Proc. IEEE INFOCOM}, Mar. 2012, pp. 2731--2735.

\bibitem{tehrani2014device}
M.~N. Tehrani, M.~Uysal, and H.~Yanikomeroglu, ``Device-to-device communication in {5G} cellular networks: challenges, solutions, and future directions,'' \emph{IEEE Commun. Mag.}, vol.~52, no.~5, pp. 86--92, May 2014.

\bibitem{shen2017fplinq}
K.~Shen and W.~Yu, ``{FPLinQ}: A cooperative spectrum sharing strategy for device-to-device communications,'' in \emph{Proc. IEEE Int. Symp. Inf. Theory (ISIT)}, Jun. 2017, pp. 2323--2327.

\bibitem{baccelli2013adaptive}
F.~Baccelli and C.~Singh, ``Adaptive spatial aloha, fairness and stochastic geometry,'' in \emph{Proc. 11th Int. Symp. Model. Optim. Mobile Ad Hoc Wireless Netw. (WiOpt)}, May 2013, pp. 7--14.

\bibitem{baccelli2014analysis}
F.~Baccelli, B.~B{\l}aszczyszyn, and C.~Singh, ``Analysis of a proportionally fair and locally adaptive spatial aloha in {Poisson} networks,'' in \emph{Proc. IEEE INFOCOM}, Jul. 2014, pp. 2544--2552.

\bibitem{kadota2018scheduling}
I.~Kadota, A.~Sinha, E.~Uysal-Biyikoglu, R.~Singh, and E.~Modiano, ``Scheduling policies for minimizing age of information in broadcast wireless networks,'' \emph{IEEE/ACM Trans. Netw.}, vol.~26, no.~6, pp. 2637--2650, Dec. 2018.

\bibitem{kadota2021minimizing}
I.~Kadota and E.~Modiano, ``Minimizing the age of information in wireless networks with stochastic arrivals,'' \emph{IEEE Trans. Mobile Comput.}, vol.~20, no.~3, pp. 1173--1185, Mar. 2021.

\bibitem{he2016optimizing}
Q.~He, D.~Yuan, and A.~Ephremides, ``Optimizing freshness of information: On minimum age link scheduling in wireless systems,'' in \emph{Proc. 14th Int. Symp. Model. Optim. Mobile Ad Hoc Wireless Netw. (WiOpt)}, Jun. 2016, pp. 1--8.

\bibitem{talak2019optimizing}
R.~Talak, S.~Karaman, and E.~Modiano, ``Optimizing information freshness in wireless networks under general interference constraints,'' \emph{IEEE/ACM Trans. Netw.}, vol.~28, no.~1, pp. 15--28, Dec. 2019.

\bibitem{sutton2018reinforcement}
R.~S. Sutton and A.~G. Barto, \emph{Reinforcement learning: An introduction}.\hskip 1em plus 0.5em minus 0.4em\relax MIT press, 2018.

\bibitem{leng2019age}
S.~Leng and A.~Yener, ``Age of information minimization for wireless ad hoc networks: A deep reinforcement learning approach,'' in \emph{Proc. IEEE Global Commun. Conf. (GLOBECOM)}, Dec. 2019, pp. 1--6.

\bibitem{liu2022graph}
Y.~Liu, C.~She, W.~Hardjawana, and B.~Vucetic, ``Graph neural networks for timely updates of short packets in interference-limited networks,'' in \emph{Proc. Asilomar Conf. Signals Syst. Comput. (ACSSC)}, Oct. 2022, pp. 1050--1054.

\bibitem{liu2021age}
Z.~Liu, Z.~Chen, L.~Luo, M.~Hua, W.~Li, and B.~Xia, ``Age of information-based scheduling for wireless device-to-device communications using deep learning,'' in \emph{Proc. IEEE Wireless Commun. Netw. Conf. (WCNC)}, Mar. 2021, pp. 1--6.

\bibitem{jones2022minimizing}
N.~Jones and E.~Modiano, ``Minimizing age of information in spatially distributed random access wireless networks,'' in \emph{Proc. IEEE INFOCOM}, May 2023, pp. 1--10.

\bibitem{neely2022stochastic}
M.~Neely, \emph{Stochastic network optimization with application to communication and queueing systems}.\hskip 1em plus 0.5em minus 0.4em\relax Springer Nature, 2022.

\bibitem{gilmer2017neural}
J.~Gilmer, S.~S. Schoenholz, P.~F. Riley, O.~Vinyals, and G.~E. Dahl, ``Neural message passing for quantum chemistry,'' in \emph{Proc. 34th Int. Conf. Mach. Learn. (ICML)}, Aug. 2017, pp. 1263--1272.

\bibitem{yu2011resource}
C.-H. Yu, K.~Doppler, C.~B. Ribeiro, and O.~Tirkkonen, ``Resource sharing optimization for device-to-device communication underlaying cellular networks,'' \emph{IEEE Trans. Commun.}, vol.~10, no.~8, pp. 2752--2763, Jun. 2011.

\bibitem{phunchongharn2013resource}
P.~Phunchongharn, E.~Hossain, and D.~I. Kim, ``Resource allocation for device-to-device communications underlaying {LTE}-advanced networks,'' \emph{IEEE Wireless Commun. Mag.}, vol.~20, no.~4, pp. 91--100, Aug. 2013.

\bibitem{sun2017update}
Y.~Sun, E.~Uysal-Biyikoglu, R.~D. Yates, C.~E. Koksal, and N.~B. Shroff, ``Update or wait: How to keep your data fresh,'' \emph{IEEE Trans. Inf. Theory}, vol.~63, no.~11, pp. 7492--7508, Nov. 2017.

\bibitem{eisen2020optimal}
M.~Eisen and A.~Ribeiro, ``Optimal wireless resource allocation with random edge graph neural networks,'' \emph{IEEE Trans. Signal Process.}, vol.~68, pp. 2977--2991, April. 2020.

\bibitem{shen2020graph}
Y.~Shen, Y.~Shi, J.~Zhang, and K.~B. Letaief, ``Graph neural networks for scalable radio resource management: Architecture design and theoretical analysis,'' \emph{IEEE J. Sel. Areas Commun.}, vol.~39, no.~1, pp. 101--115, Jan. 2021.

\bibitem{guo2021learning}
J.~Guo and C.~Yang, ``Learning power allocation for multi-cell-multi-user systems with heterogeneous graph neural networks,'' \emph{IEEE Trans. Wireless Commun.}, vol.~21, no.~2, pp. 884--897, Aug. 2021.

\bibitem{qi2017pointnet}
R.~Q. Charles, H.~Su, M.~Kaichun, and L.~J. Guibas, ``{PointNet}: Deep learning on point sets for 3d classification and segmentation,'' in \emph{Proc. IEEE Conf. Comput. Vis. Pattern Recognit. (CVPR)}, Jul. 2017, pp. 77--85.

\bibitem{series2017propagation}
\emph{Propagation Data and Prediction Methods for the Planning of ShortRange Outdoor Radiocommunication Systems and Radio Local Area Networks in the Frequency Range 300 MHz to 100 GHz}, document I. R. P.1411-11 2021.

\bibitem{kingma2014adam}
D.~P. Kingma and J.~Ba, ``{Adam}: A method for stochastic optimization,'' in \emph{Proc. Int. Conf. Learn. Represent. (ICLR)}, Dec. 2014, pp. 1--15.

\end{thebibliography}

\vspace{-0.5em}
\appendices

\section{Proof of Theorem~\ref{lemma1}}
\vspace{-0.5em}
\label{appendixA}
    First, by the definition of Lyapunov function $L(S(t))$ and incorporating \eqref{equ_evolving_AoI} and \eqref{Drift}, we have
    \begin{equation}
    \label{DeltaPi}       \Delta\left(S(t)\right) = -\sum_{i=1}^M\mathbb{E}_{\pi_{A}} \{d_i(t)\mid S(t)\}W_i(t)+B(t)  
    \end{equation}
    for any age-aware policy $\pi_A \in \Pi_{A}$.

   To proceed, we first deal with the term $\mathbb{E}_{\pi_{A}} \{d_i(t)\mid S(t)\}$, which represents the transmission success probability of the transmitter $i$ in slot $t$ by the definition of $d_i(t)$. Under the scheduling policy $\pi_{A}$, each transmitter is scheduled to generate and transmit updates independently with probability $p_i(t)$, which is the $i$-th entry in the probability vector $\boldsymbol{p}{(t)} = \pi_{A}(\boldsymbol{g}(t),\mathbf{H}^{l})$. Then the transmission success probability of the transmitter $i$ is the product of its scheduling probability and conditional success probability, thus we have
\begin{equation}
\label{pikE}
    \mathbb{E}_{\pi_{A}} \{d_i(t)\mid S(t)\} = p_i(t)\mathbb{P}\left[\xi_i(t)\ge\beta \mid a_i(t) = 1 \right ].
\end{equation}  

The conditional success probability can be obtained using a procedure similar to that in \cite{liu2021age,jones2022minimizing,baccelli2013adaptive}. Specifically, we first condition on the channel randomness of interference links and the randomized transmissions of other transmitters to~have
\begin{equation}
\label{condition2}
\begin{aligned}
\mathbb{P} [\xi_i(t)\ge\beta \mid a_i(t) = &1, h_{ji}(t),a_j(t),\forall j\neq i  ]\\
\overset{(a)}{=} \mathbb{P}\Bigg[ \left|{h}_{ii}(t) \right|^2 \ge &\frac{\beta(\sum_{ j \neq i}P_{tx}\left|{h}_{ji} (t)\right|^2{a}_j(t)+\sigma^2)}{P_{tx}} \\ 
&\Bigg| h_{ji}(t),a_j(t),\forall j\neq i \Bigg ] \\
 \overset{(b)}{=} \exp \Bigg(-\frac{\beta}{h_{ii}^l}\Bigg(&\sum_{j \neq i} \left|{h}_{ji} (t)\right|^2 a_j(t) +\frac{\sigma^2}{P_{tx}}\Bigg)\Bigg)\\
 \overset{(c)}{=} \rho_i \prod_{j\neq i}\exp \Bigg(-&\frac{\left|{h}_{ji}^{s} (t)\right|^2 a_j(t)}{D_{ji}} \Bigg),\\ 
\end{aligned}
\end{equation}  
where equation $(a)$ is derived from the definition of SINR for receiver $i$ in~\eqref{SINR_def}, equation $(b)$ is due to the fact that $\left|h_{ii}(t)\right|^2 = h_{ii}^l \left|h_{ii}^s(t)\right|^2$ and $\left|h_{ii}^s(t)\right|^2 $ follows an exponential distribution {with a mean of 1}, equation $(c)$ is by the notation $\rho_i=\exp \left(- {\beta \sigma^2}/({P_{t x}h_{ii}^l}) \right)$ and $D_{ji} = {h_{ii}^l}/({\beta h_{ji}^l})$. After that, by taking expectation on \eqref{condition2} with respect to the independent randomized transmission of other transmitters, we have 
\begin{equation}
\label{condition1}
\begin{aligned}
&\mathbb{P} [\xi_i(t)\ge\beta \mid a_i(t) = 1, h_{ji}(t),\forall j\neq i  ]\\
 & = \mathbb{E}_{\{a_j(t),\forall j \neq i\}} \Bigg\{\rho_i \prod_{j\neq i}\exp \Bigg(-\frac{\left|{h}_{ji}^{s} (t)\right|^2 a_j(t)}{D_{ji}} \Bigg)\Bigg\}\\ 
  & \overset{(d)}{=} \rho_i\prod_{j\neq i}\mathbb{E}_{\{a_j(t),\forall j \neq i\}} \Bigg\{ \exp \Bigg(-\frac{\left|{h}_{ji}^{s} (t)\right|^2 a_j(t)}{D_{ji}} \Bigg)\Bigg\}\\ 
    & \overset{(e)}{=} \rho_i\prod_{j\neq i} \Bigg [ 1- p_j(t) + p_j(t)\exp \Bigg(-\frac{\left|{h}_{ji}^{s} (t)\right|^2 }{D_{ji}} \Bigg)\Bigg ],\\     
\end{aligned}
\end{equation}  
where equation $(d)$ follows by the independent scheduling between transmitters, equation $(e)$ is obtained because $a_j(t)$ follows Bernoulli distribution ${\rm Ber}(p_j(t))$. Finally, by taking expectation on~\eqref{condition1} with respect to the channel randomness of interference links, we have
\begin{equation}
\label{condition}
\begin{aligned}
&\mathbb{P} [\xi_i(t)\ge\beta \mid a_i(t) = 1  ]\\
  = &\mathbb{E}_{\{h_{ji}(t),\forall j\neq i\}}\Bigg\{ \rho_i\prod_{j\neq i} \Bigg [ 1- p_j(t) + \frac{p_j(t)}{\exp ({\left|{h}_{ji}^{s} (t)\right|^2 }/{D_{ji}} )}\Bigg ]\Bigg\}\\ 
  = &\rho_i \prod_{j \neq i}\left(1-\frac{p_j{(t)}}{1+D_{j i}}\right),
\end{aligned}
\end{equation}  
where the last equation is because small-scale channel gain ${h}_{ji}^{s} (t)$ is independent across interference links and it follows an exponential distribution with a mean of 1.

We complete the proof by substituting \eqref{pikE} and \eqref{condition} into~\eqref{DeltaPi}.

\section{Proof of Theorem~\ref{PE}}
\label{appendix}
Before proving Theorem~\ref{PE}, we first show the permutation invariance of $f$ in problem~\eqref{problem1_def}.
\begin{lemma}[Permutation Invariance of $f$]
\label{PI}
    Consider the  problem~\eqref{problem1_def} with critical parameter $\{\boldsymbol{w},\mathbf{H}^l\}$, and $\mu = [\mu_1, \mu_2, ..., \mu_M]$ is any feasible solution to the problem ~\eqref{problem1_def}. For any permutation matrix $\mathbf{P} \in \{0,1\}^{M \times M}$ satisfies $\mathbf{P}\mathbf{1}=\mathbf{1}$ and $\mathbf{P}^T\mathbf{1}=\mathbf{1}$, the function $f$ in problem~\eqref{problem1_def} is permutation invariant, i.e., 
$$
f(\boldsymbol{\mu},\boldsymbol{w},\mathbf{H}^l )=f\left(\mathbf{P}\boldsymbol{\mu}, \mathbf{P}\boldsymbol{w},\mathbf{P} \mathbf{H}^l\mathbf{P} ^ { T }  \right) .
$$
\end{lemma}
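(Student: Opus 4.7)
The plan is to unwind $f(\mathbf{P}\boldsymbol{\mu}, \mathbf{P}\boldsymbol{w}, \mathbf{P}\mathbf{H}^l\mathbf{P}^T)$ entry by entry, observe that permuting the critical parameters simply relabels each index appearing in the objective, and then restore the original summation by reindexing. Concretely, I would first fix notation by letting $\sigma$ denote the permutation induced by $\mathbf{P}$, so that $(\mathbf{P}\boldsymbol{v})_i = v_{\sigma(i)}$ for any vector $\boldsymbol{v}$ and $(\mathbf{P}\mathbf{H}^l\mathbf{P}^T)_{ij} = h^l_{\sigma(i)\sigma(j)}$ for any matrix. The assumptions $\mathbf{P}\mathbf{1}=\mathbf{1}$ and $\mathbf{P}^T\mathbf{1}=\mathbf{1}$ guarantee $\sigma$ is a bijection on $\mathcal{M}$, which is the only property of $\mathbf{P}$ that I will actually use.

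Next, I would identify how the auxiliary quantities $\rho_i$ and $D_{ji}$ transform. Since $\rho_i$ depends only on the diagonal entry $h_{ii}^l$, the new diagonal $(\mathbf{P}\mathbf{H}^l\mathbf{P}^T)_{ii} = h^l_{\sigma(i)\sigma(i)}$ yields $\rho'_i = \rho_{\sigma(i)}$. Similarly, $D'_{ji} = (\mathbf{P}\mathbf{H}^l\mathbf{P}^T)_{ii}/(\beta (\mathbf{P}\mathbf{H}^l\mathbf{P}^T)_{ji}) = h^l_{\sigma(i)\sigma(i)}/(\beta h^l_{\sigma(j)\sigma(i)}) = D_{\sigma(j)\sigma(i)}$. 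Plugging these identities, together with $(\mathbf{P}\boldsymbol{w})_i = w_{\sigma(i)}$ and $(\mathbf{P}\boldsymbol{\mu})_i = \mu_{\sigma(i)}$, into the definition of $f$ gives
\begin{equation*}
f(\mathbf{P}\boldsymbol{\mu}, \mathbf{P}\boldsymbol{w}, \mathbf{P}\mathbf{H}^l\mathbf{P}^T) = -\sum_{i=1}^M w_{\sigma(i)} \rho_{\sigma(i)} \mu_{\sigma(i)} \prod_{j\neq i}\left(1-\frac{\mu_{\sigma(j)}}{1+D_{\sigma(j)\sigma(i)}}\right).
\end{equation*}

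The last step is a change of dummy indices $i' = \sigma(i)$ in the outer sum and $j' = \sigma(j)$ in the inner product. Because $\sigma$ is a bijection, as $i$ ranges over $\mathcal{M}$ so does $i'$; likewise, for each fixed $i$, the inner index $j$ traverses $\mathcal{M}\setminus\{i\}$ exactly when $j'$ traverses $\mathcal{M}\setminus\{\sigma(i)\} = \mathcal{M}\setminus\{i'\}$. Substituting back recovers exactly $-\sum_{i'} w_{i'}\rho_{i'}\mu_{i'}\prod_{j'\neq i'}(1-\mu_{j'}/(1+D_{j'i'})) = f(\boldsymbol{\mu},\boldsymbol{w},\mathbf{H}^l)$, which completes the proof.

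I do not anticipate a real obstacle here, since once the action of $\mathbf{P}$ on vectors and on the two-sided conjugation of $\mathbf{H}^l$ is written out, the claim is a bookkeeping exercise. The only place requiring mild care is verifying that the constraint $j\neq i$ inside the product is preserved under the index substitution, which follows immediately from the injectivity of $\sigma$. Once Lemma~\ref{PI} is established in this form, Theorem~\ref{PE} follows directly: if $\boldsymbol{\mu}^\star \in \Upsilon$ attains the minimum of $f$ under parameters $\{\boldsymbol{w},\mathbf{H}^l\}$, then by the invariance $\mathbf{P}\boldsymbol{\mu}^\star$ attains the same value of $f$ under $\{\mathbf{P}\boldsymbol{w},\mathbf{P}\mathbf{H}^l\mathbf{P}^T\}$, and the reverse inclusion uses $\mathbf{P}^{-1} = \mathbf{P}^T$.
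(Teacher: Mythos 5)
Your proof is correct and follows essentially the same route as the paper: the paper defines the summands $\xi_i$ and simply observes that permuting the variable and the critical parameters only reorders the sequence $\{\xi_1,\ldots,\xi_M\}$, leaving the sum unchanged, whereas you spell out the same reindexing argument explicitly (tracking how $\rho_i$ and $D_{ji}$ transform under $\sigma$). Your version is just a more detailed bookkeeping of the observation the paper states in one line.
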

\begin{proof}
    Define $\xi_i = W_i(t) \rho_i \mu_{i} \prod_{j \neq i}\left(1-\frac{\mu_j}{1+D_{j i}}\right)$,  
    and we have $f(\boldsymbol{\mu},\boldsymbol{w},\mathbf{H}^l) = \sum_{i=1}^M \xi_i$. 
    By observing the expression of  $\xi_i$, any permutation to the  variable and critical parameters only leads to a reordering of the sequence $\{\xi_1,\xi_2,...,\xi_M\}$, while the overall summation remains unchanged. This implies that the function $f$ is permutation invariant.
\end{proof}
Based on Lemma~\ref{PI}, we can prove the permutation equivariance of $\Upsilon$. Define a set $\hat{\Upsilon} = \left\{\mathbf{P}\boldsymbol{\mu}\mid \boldsymbol{\mu}\in \Upsilon\right\}$. To prove $\Upsilon^\prime = \hat{\Upsilon}$, it suffices to prove $\hat{\Upsilon} \subseteq  \Upsilon^\prime$ and $\Upsilon^\prime \subseteq  \hat{\Upsilon}$ . We first show $\hat{\Upsilon} \subseteq  \Upsilon^\prime$ holds. Any optimal solution $\boldsymbol{\mu} \in \Upsilon$ achieves the optimum of problem~\eqref{problem1_def} based on critical parameters $\{\boldsymbol{w}, \mathbf{H}^l\}$. The permutation invariance of $f$ promises that solution $\mathbf{P}\boldsymbol{\mu}$ will also achieve the same optimum of the reformulated problem~\eqref{problem1_def} based on the permuted parameters $\{\mathbf{P}\boldsymbol{ w},\mathbf{P} \mathbf{H}^l\mathbf{P} ^ { T }\}$. Therefore, $\mathbf{P}\boldsymbol{\mu}\in \Upsilon^\prime$, which verifies $\hat{\Upsilon} \subseteq  \Upsilon^\prime$. We turn to prove  $\Upsilon^\prime \subseteq  \hat{\Upsilon}$. Specifically, based on the above definitions, D2D network $\mathcal{D}^\prime$ can be transformed into D2D network $\mathcal{D}$ with some inverse permutation operations, and the corresponding critical parameters $\{\mathbf{P}\boldsymbol{w},\mathbf{P} \mathbf{H}^l\mathbf{P} ^ { T } \}$  can also be permuted into critical parameters $\{\boldsymbol{w},\mathbf{H}^l \}$ with permutation matrix $\mathbf{P}^T$. Thus, for any optimal solution $\boldsymbol{\mu}^\prime \in \Upsilon^\prime$, the permutation invariant function $f$ guarantees $\mathbf{P}^T\boldsymbol{\mu}^\prime \in \Upsilon$. Recall the definition of set $\hat{\Upsilon}$ and $\mathbf{P}\mathbf{P}^T = \mathbf{I}$, we then have $\boldsymbol{\mu}^\prime = \mathbf{P}(\mathbf{P}^T\boldsymbol{\mu}^\prime ) \in  \hat{\Upsilon}$. This completes the proof.

\end{document}